\newtheorem{theorem}{Theorem}
\newtheorem{corollary}{Corollary}
\begin{document}

% Use the \preprint command to place your local institutional report
% number in the upper righthand corner of the title page in preprint mode.
% Multiple \preprint commands are allowed.
% Use the 'preprintnumbers' class option to override journal defaults
% to display numbers if necessary
%\preprint{}

%Title of paper
\title{Tighter Error Bounds for the qDRIFT Algorithm}

% repeat the \author .. \affiliation  etc. as needed
% \email, \thanks, \homepage, \altaffiliation all apply to the current
% author. Explanatory text should go in the []'s, actual e-mail
% address or url should go in the {}'s for \email and \homepage.
% Please use the appropriate macro foreach each type of information

% \affiliation command applies to all authors since the last
% \affiliation command. The \affiliation command should follow the
% other information
% \affiliation can be followed by \email, \homepage, \thanks as well.
%\author{I.J. David}
%\email[]{Your e-mail address}
%\homepage[]{Your web page}
%\thanks{}
%\altaffiliation{}

\date{\today}

\author{I. J. David$^{1,2,4}$}
\email{ian.david@nithecs.ac.za}
\author{I. Sinayskiy$^{1,2}$}
\email{sinayskiy@ukzn.ac.za}
\author{F. Petruccione$^{2,3}$}
\email{petruccione@sun.ac.za}
\affiliation{$^{1}$School of Chemistry and Physics, University of KwaZulu-Natal, Durban 4001, South Africa.}
\affiliation{$^{2}$National Institute for Theoretical and Computational Sciences (NITheCS), Stellenbosch, South Africa.}
\affiliation{$^{3}$School of Data Science and Computational Thinking, Stellenbosch 7604, South Africa.}
\affiliation{$^{4}$Fraqtal Technologies, Durban 4051, South Africa}

%Collaboration name if desired (requires use of superscriptaddress
%option in \documentclass). \noaffiliation is required (may also be
%used with the \author command).
%\collaboration can be followed by \email, \homepage, \thanks as well.
%\collaboration{}
%\noaffiliation

\date{\today}

\begin{abstract}
Randomized algorithms such as qDRIFT provide an efficient framework for quantum simulation by sampling terms from a decomposition of the system’s generator. However, existing error bounds for qDRIFT scale quadratically with the norm of the generator, limiting their efficiency for large-scale closed or open quantum system simulation. In this work, we refine the qDRIFT error bound by incorporating Jensen’s inequality and a careful treatment of the integral form of the error. This yields an improved scaling that significantly reduces the number of steps required to reach a fixed simulation accuracy. Our result applies to both closed and open quantum systems, and we explicitly recover the improved bound in the Hamiltonian case. To demonstrate the practical impact of this refinement, we apply it to three settings: quantum chemistry simulations, dissipative transverse field Ising models, and Hamiltonian encoding of classical data for quantum machine learning. In each case, our bound leads to a substantial reduction in gate counts, highlighting its broad utility in enhancing randomized simulation techniques.
\end{abstract}

% insert suggested keywords - APS authors don't need to do this
%\keywords{}

\maketitle

\section{Introduction}
\label{sec1}
Quantum simulation is widely regarded as one of the most promising applications of quantum computing, first proposed by Feynman and Manin \cite{feynman1982simulating, manin1980computable}. It offers the potential to study complex quantum systems beyond the reach of classical methods \cite{lloyd1996universal}. Among the various techniques developed for quantum simulation \cite{berry2007efficient, berry2015simulating, berry2015hamiltonian, childs2021theory, low2017optimal, low2019hamiltonian, childs2012hamiltonian, childs2019nearly}, randomized methods \cite{childs2019faster,campbell2019random,chen2024randomized, david2024faster,ding2025lower} have emerged as particularly promising due to their simplicity and favourable error scaling properties.

One of the most notable contributions in this direction is the qDRIFT algorithm \cite{campbell2019random} which simulates unitary dynamics generated by sparse Hamiltonians using randomized gate sequences. In contrast to traditional Trotter-based approaches \cite{berry2007efficient,childs2021theory}, which make use of the Trotter-Suzuki product formulas \cite{suzuki1990fractal,suzuki1991general}, qDRIFT avoids the use of Trotter-Suzuki product formulas and makes use of a stochastic process to construct a gate set and achieves an error that is independent of the number of terms in the Hamiltonian. More recently the qDRIFT framework has been extended to Makovian open quantum systems described by Lindblad master equations \cite{david2024faster,chen2024randomized,ding2025lower}, allowing stochastic simulation of dissipative dynamics through random sampling of terms from the generator.

Despite these advances, a significant challenge remains in the scaling of the number of required steps (and hence, the number of quantum gates) needed to achieve a fixed simulation accuracy. Finding tighter error bounds for existing quantum simulation algorithms solves this problem and improves accuracy and efficiency of the algorithm. In both the Hamiltonian and Lindbladian settings, the current bounds scale quadratically with respect to the sum of the coefficients in the generator (or Hamiltonian in the closed system case), leading to large gate counts in practical applications. This scaling becomes especially pronounced when simulating complex many-body systems.

In this work, we present a new proof for the qDRIFT algorithm that tightens its theoretical error bound, thereby reducing the number of steps required to simulate both closed and open quantum systems. Our result improves the dependence of the number of steps needed by qDRIFT with respect to the norms of the terms in the generator. In the closed system case we improve the scaling of the number steps from quadratic in the sum of the coefficients to linear in this parameter. In the open system case we remove the dependence on the maximum of the diamond norms of the terms in the generator. This improvement leads to a refined bound on the number of steps needed by the qDRIFT algorithm. This refinement is achieved by making use of the integral error representation of Taylors theorem as in \cite{childs2021theory} and Jensens inequality \cite{jensen1906fonctions,durrett2019probability}.

To demonstrate the practical impact of our result, we apply the new bound in three representative scenarios: the simulation of molecular electronic structure Hamiltonians relevant to quantum chemistry\cite{wecker2014gate} and compare our new bound to the original qDRIFT bound by estimating the number of steps needed for simulating the molecules studied in \cite{campbell2019random}, dissipative spin dynamics in the transverse field Ising model \cite{hashizume2022dynamical}, and Hamiltonian encoding of high-dimensional data in quantum machine learning \cite{schuld2021machine}. In each case, we show that the new bound leads to substantial reductions in the number of steps, often by several orders of magnitude.

The remainder of this paper is organized as follows. In Section \ref{sec2}, we review the qDRIFT method and its application to both closed and open system dynamics. Section \ref{sec3} presents our improved error bound, including a detailed derivation and corollary for the Hamiltonian case. In Section \ref{sec4}, we demonstrate the advantages of our approach across three application domains. Finally, in Section \ref{sec5}, we conclude with a discussion of the implications of our results and potential future directions.
\section{Background and Motivation}
\label{sec2}

The state of a $d$-dimensional quantum system at time $t$ is described by a $d \times d$ density matrix $\rho(t)$. The density matrix evolves under a quantum channel $\Lambda(t)$, which is a completely positive and trace-preserving (CPTP) map \cite{breuer2002theory,rivas2012open},
\begin{align}
    \rho(t) = \Lambda(t)\rho(0).
\end{align}
This channel satisfies the master equation
\begin{align}
\label{eq:master_eqn}
    \dv{t} \Lambda(t) = \mathcal{L} \Lambda(t),
\end{align}
where $\mathcal{L}$ is the Liouvillian, also referred to as the generator. When the quantum channel describes unitary evolution of a closed system, the Liouvillian takes the form
\begin{align}
    \mathcal{L} \rho = -i[H, \rho],
\end{align}
where $H$ is the system Hamiltonian. When the channel describes a Markovian open quantum system, the Liouvillian takes the well-known Gorini-Kossakowski-Sudarshan-Lindblad (GKSL) form \cite{gorini1976completely,lindblad1976generators},
\begin{align}
    \mathcal{L} \rho = -i[H, \rho] + \sum_k \gamma_k \left( L_k \rho L_k^\dagger - \frac{1}{2} \{L_k^\dagger L_k, \rho\} \right),
\end{align}
where $H$ is the Hamiltonian, $\gamma_k \geq 0$ are decay rates, and $\{L_k\}$ are jump operators. Solving the master equation \eqref{eq:master_eqn} yields
\begin{align}
    \Lambda(t) = \exp(t \mathcal{L}).
\end{align}

For the remainder of this work, we express the generator as a linear combination of the form
\begin{align}
\label{eq:generator}
    \mathcal{L} = \sum_{k=1}^{n} \lambda_k \mathcal{L}_k,
\end{align}
where $\lambda_k \geq 0$ and $n$ is the number of terms. This unified form enables a consistent treatment of generators for both closed and open system dynamics. We proceed as follows:

\begin{itemize}
    \item[(i)] \textbf{Closed systems:} For unitary evolution generated by $-i[H, \rho]$, we express the Hamiltonian as
    \begin{align}
        H = \sum_s h_s H_s,
    \end{align}
    where $H_s$ are Hermitian and normalized, and the coefficients $h_s$ can be chosen to be non-negative. The generator then takes the form \eqref{eq:generator} by setting $\lambda_k = h_k$ and $\mathcal{L}_k(\rho) = -i[H_k, \rho]$.

    \item[(ii)] \textbf{Open systems:} For the GKSL generator, we apply the same decomposition to the Hamiltonian part. For the dissipative part, we set $\lambda_k = \gamma_k$ and define $\mathcal{L}_k(\rho) = L_k \rho L_k^\dagger - \frac{1}{2} \{L_k^\dagger L_k, \rho\}$.
\end{itemize}

In quantum simulation, the goal is to efficiently simulate the channel $\Lambda(t)$ on a quantum computer. Various methods exist for simulating such evolutions \cite{berry2007efficient,berry2015simulating,childs2012hamiltonian,low2017optimal}, but we focus on the qDRIFT method, which applies to both closed \cite{campbell2019random} and open systems \cite{chen2024randomized,david2024faster}. This method relies on expressing the generator $\mathcal{L}$ as a sum of simple terms that are easy to exponentiate.

qDRIFT simulates the evolution $\Lambda(t)$ up to a specified precision $\epsilon > 0$ using a stochastic process based on the so-called qDRIFT channel \cite{campbell2019random}. The total simulation time $t$ is divided into $r$ time steps of length $\omega = t\Gamma / r$, where
\begin{align}
    \Gamma = \sum_{k=1}^{n} \lambda_k
\end{align}
is the sum of the coefficients in \eqref{eq:generator}. The qDRIFT channel is a convex mixture of the exponentials of the individual generator terms. The probability of applying $\exp(\tau \mathcal{L}_k)$ is given by
\begin{align}
    p_k = \frac{\lambda_k}{\Gamma},
\end{align}
so that the qDRIFT channel is defined as
\begin{align}
\label{eq:qdrift_channel}
    \mathcal{E}(\tau) = \sum_{k=1}^{n} p_k \exp(\tau \mathcal{L}_k).
\end{align}
This channel represents a random process where each exponential is applied with probability $p_k$, resulting in an average evolution that stochastically “drifts” towards $\Lambda(t)$. Since each operation is sampled independently, the process is fully Markovian, and one may analyze it in terms of repeated applications of a single-step random channel.

The error in approximating $\Lambda(t)$ by $\mathcal{E}(\omega)^r$ is given by
\begin{align}
    \|\Lambda(t) - \mathcal{E}(\omega)^r\|_{\diamond} \leq \frac{(t \Gamma \Omega)^2}{r} \exp\left( \frac{t \Gamma \Omega}{r} \right),
\end{align}
where $\|\cdot\|_{\diamond}$ denotes the diamond norm \cite{watrous2004notes,watrous2009semidefinite}, and $\Omega = \max_k \|\mathcal{L}_k\|_{\diamond}$. In practice, this inequality is used to estimate the number of steps $r$ required to achieve a desired accuracy $\epsilon$ by bounding the right-hand side above by $\epsilon$ and solving for $r$. Approximating the exponential term by 1, as done in \cite{campbell2019random}, leads to the simplified bound
\begin{align}
    r \geq \left\lceil \frac{(t \Gamma \Omega)^2}{\epsilon} \right\rceil.
\end{align}

A key feature of the qDRIFT method, as reflected in this bound, is that the required number of steps $r$ is independent of the number of generator terms $n$. This makes qDRIFT particularly well suited for systems with many generator terms, such as two dimensional spin lattices with dissipation \cite{rota2018dynamical} or electronic structure Hamiltonians \cite{wecker2014gate}. However, the quadratic dependence on $\Gamma$ and $\Omega$ may still lead to large $r$ in practice.

In this work, we re-derive the qDRIFT error bound for both closed and open quantum systems to analyse its scaling with $\Gamma$ and $\Omega$ more precisely. Our refined analysis, based on the integral form of the Taylor expansion error and Jensen's inequality \cite{jensen1906fonctions,durrett2019probability}, yields significantly tighter bounds. For closed systems, the new bound scales linearly with $\Gamma$ and is independent of $\Omega$. For open systems, the bound avoids dependence on the maximum diamond norm of the generator terms. Furthermore, our formulation eliminates exponential pre-factors, enabling a direct bound on $r$ without requiring additional approximations. Section \ref{sec3} presents and proves our main result, demonstrating how qDRIFT can be accelerated.

\section{Improved Error Bounds for the qDRIFT Protocol}
\label{sec3}
In this section, we present a modified qDRIFT simulation protocol that achieves a tighter convergence bound compared to the standard approach. We formalize this in Theorem~\ref{theorem1} and provide a detailed proof. The result enables a more efficient scaling of number of steps $r$ with respect to dependence on the norm of generator.

\begin{theorem}
\label{theorem1}
    Given the generator $\mathcal{L}$, as in equation (\ref{eq:generator}), of a quantum channel $\Lambda(t)$ as well as the simulation time $t\geq 0$, a positive integer $r$. Then,
    \begin{align}
        \|\Lambda(t)- \mathcal{E}(t/r)^{r\Gamma}\|_{\diamond}\leq \frac{t^{2}\sum_{k}\lambda_{k}\|\mathcal{L}_{k}\|_{\diamond}^{2}}{r},
    \end{align}
where $\mathcal{E}$ is defined in equation (\ref{eq:qdrift_channel}) and $r$ is bounded by,
\begin{align}
    r \geq  \left\lceil \frac{t^{2}}{\epsilon}\sum_{k=1}^{n}\lambda_{k}\|\mathcal{L}_{k}\|_{\diamond}^{2}\right\rceil.
\end{align}
\end{theorem}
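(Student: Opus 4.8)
The plan is to compare the exact single-step channel $\Lambda(t/(r\Gamma))=\exp\!\big((t/(r\Gamma))\mathcal{L}\big)$ with the qDRIFT channel $\mathcal{E}(t/r)$ over one step, bound the one-step discrepancy, and then lift to $r\Gamma$ steps using the standard telescoping (sub-additivity) argument for the diamond norm together with the fact that channels are contractive. Writing $\tau = t/(r\Gamma)$, note that $\mathcal{E}(t/r)=\sum_k p_k \exp(\Gamma\tau\,\mathcal{L}_k)$, so I would first expand both $\exp(\tau\mathcal{L})$ and $\mathcal{E}(t/r)$ using the integral form of Taylor's theorem, $e^{A}=\mathbb{1}+A+\int_0^1 (1-s)\,e^{sA}A^2\,ds$, exactly as in \cite{childs2021theory}. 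The zeroth- and first-order terms cancel: $\sum_k p_k(\Gamma\mathcal{L}_k) = \sum_k \lambda_k\mathcal{L}_k = \mathcal{L}$, so the leading contribution to $\Lambda(\tau)-\mathcal{E}(t/r)$ is quadratic in $\tau$.

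The key step is to estimate the remainder. For the exact evolution the remainder is $\int_0^1(1-s)e^{s\tau\mathcal{L}}(\tau\mathcal{L})^2\,ds$, and since $e^{s\tau\mathcal{L}}$ is a CPTP map it has diamond norm $1$; bounding $\|\mathcal{L}\|_\diamond \le \Gamma\,\max_k\|\mathcal{L}_k\|_\diamond$ would reproduce the old bound, so instead I keep $\mathcal{L}=\sum_k\lambda_k\mathcal{L}_k$ and apply \emph{Jensen's inequality} to the probability weights $p_k=\lambda_k/\Gamma$: writing $(\tau\mathcal{L})^2 = (\Gamma\tau)^2\big(\sum_k p_k\mathcal{L}_k\big)^2$, convexity of $X\mapsto\|X\|_\diamond^2$ (or of the vector-valued map followed by the norm) gives $\big\|\big(\sum_k p_k\mathcal{L}_k\big)^2\big\|_\diamond \le \sum_k p_k\|\mathcal{L}_k\|_\diamond^2$ after handling the fact that $(\sum_k p_k \mathcal{L}_k)^2 = \sum_{k,j}p_kp_j\mathcal{L}_k\mathcal{L}_j$ is not literally $\sum_k p_k \mathcal{L}_k^2$ — here one uses $\|\mathcal{L}_k\mathcal{L}_j\|_\diamond\le\|\mathcal{L}_k\|_\diamond\|\mathcal{L}_j\|_\diamond$ and then Cauchy–Schwarz/AM–GM, $\|\mathcal{L}_k\|_\diamond\|\mathcal{L}_j\|_\diamond \le \tfrac12(\|\mathcal{L}_k\|_\diamond^2+\|\mathcal{L}_j\|_\diamond^2)$, so $\sum_{k,j}p_kp_j\|\mathcal{L}_k\|_\diamond\|\mathcal{L}_j\|_\diamond \le \sum_k p_k\|\mathcal{L}_k\|_\diamond^2$. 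The qDRIFT remainder $\sum_k p_k\int_0^1(1-s)e^{s\Gamma\tau\mathcal{L}_k}(\Gamma\tau\mathcal{L}_k)^2\,ds$ is bounded the same way, directly giving $(\Gamma\tau)^2\sum_k p_k\|\mathcal{L}_k\|_\diamond^2 = \Gamma\tau^2\sum_k\lambda_k\|\mathcal{L}_k\|_\diamond^2$. Since $\int_0^1(1-s)\,ds=\tfrac12$, each remainder contributes a factor $\tfrac12$, and combining the two gives the one-step bound
\begin{align}
\|\Lambda(\tau)-\mathcal{E}(t/r)\|_\diamond \le \Gamma\tau^2\sum_k \lambda_k\|\mathcal{L}_k\|_\diamond^2 .
\end{align}

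Finally I lift to the full evolution: using $\Lambda(t)=\Lambda(\tau)^{r\Gamma}$ and the telescoping identity $A^{N}-B^{N}=\sum_{j=0}^{N-1}A^{j}(A-B)B^{N-1-j}$ with $N=r\Gamma$, the diamond norm is sub-multiplicative and CPTP maps are non-expansive, so $\|\Lambda(t)-\mathcal{E}(t/r)^{r\Gamma}\|_\diamond \le r\Gamma\,\|\Lambda(\tau)-\mathcal{E}(t/r)\|_\diamond \le r\Gamma\cdot \Gamma(t/(r\Gamma))^2\sum_k\lambda_k\|\mathcal{L}_k\|_\diamond^2 = \frac{t^2}{r}\sum_k\lambda_k\|\mathcal{L}_k\|_\diamond^2$, which is exactly the claimed bound; setting the right-hand side $\le\epsilon$ and solving for $r$ gives the stated ceiling. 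I expect the main obstacle to be the Jensen/convexity step: one must be careful that the cross terms $\mathcal{L}_k\mathcal{L}_j$ in $(\sum_k p_k\mathcal{L}_k)^2$ are controlled correctly (the AM–GM symmetrization above is what makes it work) and that the exact-evolution remainder, where $e^{s\tau\mathcal{L}}$ sits outside the square, is handled by first pulling out the CPTP factor in diamond norm before applying convexity to $\|(\sum_k p_k\mathcal{L}_k)^2\|_\diamond$.
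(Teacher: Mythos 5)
Your proposal is correct and follows essentially the same route as the paper: the telescoping/sub-additivity lift to $r\Gamma$ steps, the integral-form Taylor remainder with cancellation of the zeroth- and first-order terms, contractivity of the CPTP exponentials in diamond norm, and the convexity step $\left(\sum_k p_k\|\mathcal{L}_k\|_\diamond\right)^2\le\sum_k p_k\|\mathcal{L}_k\|_\diamond^2$ (your AM--GM symmetrization of the cross terms is just an elementary proof of the Jensen inequality the paper invokes directly after first applying sub-additivity to $\|\mathcal{L}/\Gamma\|_\diamond$). All constants, including the two factors of $\tfrac12$ from $\int_0^1(1-s)\,ds$, match the paper's derivation.
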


\begin{proof}
    We start by considering the channel $\Lambda(t)=\exp(t\mathcal{L})$ we see that $\Lambda(t)=\Lambda(t/r\Gamma)^{r\Gamma}$. This allows us to write,
    \begin{align}
        \|\Lambda(t)-\mathcal{E}(t/r)^{r\Gamma}\|_{\diamond}&=\|\Lambda(t/r\Gamma)^{r\Gamma}-\mathcal{E}(t/r)^{r\Gamma}\|_{\diamond}\ ,\\
        \label{eq:lemma_1_step}
        &\leq r\Gamma \|\Lambda(t/r\Gamma)-\mathcal{E}(t/r)\|_{\diamond}
    \end{align}
where in (\ref{eq:lemma_1_step}) we have used Lemma 1 from \cite{david2024faster}. Now we need to bound the norm $\|\Lambda(t/r\Gamma)-\mathcal{E}(t/r)\|_{\diamond}$, we shall do this using the integral representation of the error for the Taylor expansion \cite{apostol1991calculus, bartle1992introduction, childs2021theory}. We start by defining $\eta=t/r\Gamma$ expanding $\Lambda(\eta)$ to first order with a second order error term,
\begin{align}
    \Lambda(\eta)= \mathbb{1}+\eta\mathcal{L}+\int_{0}^{\eta} dx \left(\eta-x\right) \ \dv[2]{\eta}\left(e^{\eta \mathcal{L}}\right)_{\eta=x}.
\end{align}
Next we expand the exponential in the qDRIFT channel to first order with second order error here we define $\tau=t/r$ then,
\begin{align}
    \mathcal{E}(\tau)=\sum_{k=1}^{n}p_{k}\left( \mathbb{1}+\tau\mathcal{L}_{k}+\int_{0}^{\tau}dy \ (\tau-y)\dv[2]{\tau}\left(e^{\tau\mathcal{L}_{k}}\right)_{\tau=y}\right).
\end{align}
Now we can look at the difference $\Lambda(\eta)-\mathcal{E}(\tau)$ where we see that the terms up to first order cancel since $\sum_{k}p_{k}=1$ and,
\begin{align}
\sum_{k=1}^{n}p_{k}\tau\mathcal{L}_{k}=\sum_{k=1}^{n}\frac{\lambda_{k}}{\Gamma}\frac{t}{r}\mathcal{L}_{k}=\frac{t}{r\Gamma}\sum_{k=1}^{n}\lambda_{k}\mathcal{L}_{k}=\eta \mathcal{L}.
\end{align}
What remains is,
\begin{align}
\label{eq:error_terms_1}
   & \Lambda(\eta)-\mathcal{E}(\tau)=
    \int_{0}^{\eta} dx \left(\eta-x\right) \ \dv[2]{\eta}\left(e^{\eta \mathcal{L}}\right)_{\eta=x}\nonumber\\
    &- \sum_{k=1}^{n}p_{k} \int_{0}^{\tau}dy \ (\tau-y)\dv[2]{\tau}\left(e^{\tau\mathcal{L}_{k}}\right)_{\tau=y}.
\end{align}
The derivatives in (\ref{eq:error_terms_1}) can we evaluated and we have that $\dv[2]{\eta}\left(e^{\eta \mathcal{L}}\right)\big|_{\eta=x}=\mathcal{L}^{2}e^{x\mathcal{L}}$ and $\dv[2]{\tau}\left(e^{\tau\mathcal{L}_{k}}\right)\big|_{\tau=y}=\mathcal{L}_{k}^{2}e^{y\mathcal{L}_{k}}$, which can be substituted into (\ref{eq:error_terms_1}),
\begin{align}
\label{eq:error_terms_2}
    \Lambda(\eta)-\mathcal{E}(\tau)&=\int_{0}^{\eta} dx \left(\eta-x\right)\mathcal{L}^{2}e^{x\mathcal{L}} \nonumber\\
    &- \sum_{k=1}^{n}p_{k} \int_{0}^{\tau}dy \ (\tau-y)\mathcal{L}_{k}^{2}e^{y\mathcal{L}_{k}}.
\end{align}
Let us analyse each term in (\ref{eq:error_terms_2}). In the first term we make the substitution $x=u\eta$ which leads to,
\begin{align}
    \int_{0}^{\eta} dx \left(\eta-x\right)\mathcal{L}^{2}e^{x\mathcal{L}}&=\int_{0}^{1} du \ \eta^{2}(1-u)\mathcal{L}^{2}e^{u\eta\mathcal{L}},
\end{align}
Next we replace $\eta=t/r\Gamma$ and rewrite the integrand in a more convenient form so that,
\begin{align}
\label{eq:simp_int_1}
    \int_{0}^{1} du \ & \eta^{2}(1-u)\mathcal{L}^{2}e^{u\eta\mathcal{L}}\nonumber\\
    &=\int_{0}^{1}du \ \left(\frac{t}{r}\right)^{2}(1-u)\left(\frac{\mathcal{L}}{\Gamma}\right)^{2}e^{ut\mathcal{L}/r\Gamma}.
\end{align}
Analysing the second term in (\ref{eq:error_terms_2}), we make the substitution $y=u\tau$ and replace $\tau=t/r$ so that,
\begin{align}
    \sum_{k=1}^{n}p_{k} \int_{0}^{\tau}dy & \ (\tau-y)\mathcal{L}_{k}^{2}e^{y\mathcal{L}_{k}}\nonumber\\
    &=\sum_{k=1}^{n}p_{k}\int_{0}^{1}du \ \left(\frac{t}{r}\right)^{2}(1-u)\mathcal{L}_{k}^{2}e^{ut\mathcal{L}_{k}/r}.
\end{align}
Then we use the properties of the integral and sums so that we can write,
\begin{align}
\label{eq:simp_int_2}
    \sum_{k=1}^{n}p_{k}\int_{0}^{1}du &\ \left(\frac{t}{r}\right)^{2}(1-u)\mathcal{L}_{k}^{2}e^{ut\mathcal{L}_{k}/r}\nonumber\\
    &=\int_{0}^{1}du \ \left(\frac{t}{r}\right)^{2}(1-u) \sum_{k=1}^{n}p_{k}\mathcal{L}_{k}^{2}e^{ut\mathcal{L}_{k}/r}
\end{align}

Now taking the diamond norm on both sides of (\ref{eq:error_terms_1}) and using the simplified integrals in (\ref{eq:simp_int_1}) and (\ref{eq:simp_int_2}) as well as the sub-additive, sub-multiplicative and integral property of the diamond norm we have,
\begin{align}
    &\|\Lambda(t/r\Gamma)-\mathcal{E}(t/r)\|_{\diamond}\leq \nonumber\\
    &\int_{0}^{1}du \ \left(\frac{t}{r}\right)^{2}(1-u)\left\|\frac{\mathcal{L}}{\Gamma}\right\|_{\diamond}^{2}\|e^{ut\mathcal{L}/r\Gamma}\|_{\diamond}\nonumber\\
    &+\int_{0}^{1} du \ \left(\frac{t}{r}\right)^{2}(1-u)\sum_{k=1}^{n}p_{k}\|\mathcal{L}_{k}\|_{\diamond}^{2}\|e^{ut\mathcal{L}_{k}/r}\|_{\diamond}
\end{align}
Now using the fact that $u\in [0,1]$ as well as $t/r\Gamma \geq 0$ and $t/r \geq 0$ we have that both exponentials are quantum channels and hence,
\begin{align}
    \|e^{ut\mathcal{L}/r\Gamma}\|_{\diamond}\leq 1, && \|e^{ut\mathcal{L}_{k}/r}\|_{\diamond}\leq 1.
\end{align}
Now we need to bound $\|\mathcal{L}/\Gamma\|_{\diamond}^{2}$.  We observe that,
\begin{align}
    \frac{\mathcal{L}}{\Gamma}=\sum_{k=1}^{n}\frac{\lambda_{k}}{\Gamma}\mathcal{L}_{k}=\sum_{k=1}^{n}p_{k}\mathcal{L}_{k}.
\end{align}
By the sub-additive property of the diamond norm we know that,
\begin{align}
    \left\|\frac{\mathcal{L}}{\Gamma}\right\|_{\diamond}\leq \sum_{k=1}^{n}p_{k}\|\mathcal{L}_{k}\|_{\diamond}
\end{align}
Taking the sqaure of both sides we have,
\begin{align}
  \left\|\frac{\mathcal{L}}{\Gamma}\right\|_{\diamond}^{2}\leq \left(\sum_{k=1}^{n}p_{k}\|\mathcal{L}_{k}\|_{\diamond}\right)^{2}
\end{align}
Now we need to bound this quantity, we do this by using Jensen's inequality \cite{jensen1906fonctions,durrett2019probability}, which states that for a real valued convex function $f$ defined on an interval $A \subset \mathbb{R}$ with $z_{1},z_{2},...,z_{k} \in A$ and $w_{1},w_{2},...,w_{k}\geq 0$, then we have,
\begin{align}
    f\left(\frac{\sum_{i=1}^{k}w_{i}z_{i}}{\sum_{j=1}^{k}z_{j}}\right)\leq\frac{\sum_{i=1}^{k}w_{k}f(z_{k})}{\sum_{j=1}^{k}z_{j}}.
\end{align}
Now we observe that the function $x\mapsto x^{2}$ is convex and that the input of a convex sum $\sum_{k=1}^{n}p_{k}\|\mathcal{L}_{k}\|_{\diamond}$ and Jensens inequality leads to,
\begin{align}
    \left\|\frac{\mathcal{L}}{\Gamma}\right\|_{\diamond}^{2}\leq \left(\sum_{k=1}^{n}p_{k}\|\mathcal{L}_{k}\|_{\diamond}\right)^{2} \leq \sum_{k=1}^{n}p_{k}\|\mathcal{L}_{k}\|_{\diamond}^{2}.
\end{align}
Putting everything together and using the fact that $\int_{0}^{1}du \ (1-u)=1/2$ we have,
\begin{align}
\label{eq:small_step_bound}
    \|\Lambda(t/r\Gamma)-\mathcal{E}(t/r)\|_{\diamond}\leq \frac{t^{2}}{r^{2}}\sum_{k=1}^{n}p_{k}\|\mathcal{L}_{k}\|_{\diamond}.
\end{align}
Using (\ref{eq:small_step_bound}) with (\ref{eq:lemma_1_step}) we have,
\begin{align}
\label{eq:main_bound}
    \|\Lambda(t)-\mathcal{E}(t/r)^{r\Gamma}\|_{\diamond}&\leq \frac{t^{2}\Gamma}{r}\sum_{k=1}^{n}p_{k}\|\mathcal{L}_{k}\|_{\diamond}^{2}\nonumber\\
    &=\frac{t^{2}}{r}\sum_{k=1}^{n}\lambda_{k}\|\mathcal{L}_{k}\|_{\diamond}^{2}.
\end{align}
Bounding (\ref{eq:main_bound}) by $\epsilon$, solving for $r$ and taking the ceiling function we achieve the desired bound for $r$ and thus completing the proof.
\end{proof}

We now apply Theorem~\ref{theorem1} to the special case of unitary dynamics generated by a Hamiltonian. This is the setting originally considered in the qDRIFT proposal, where each $\mathcal{L}_k$ generates the evolution of a closed system. The following corollary shows how the newly derived bound on $r$ from Theorem \ref{theorem1}, improves upon the bound in \cite{campbell2019random} by yielding a linear scaling in the $\Gamma$, that is, the sum of the coefficients in the Hamiltonian.

\begin{corollary}
\label{corollary1}
    For a generator $\mathcal{L}$ that describes closed system evolution with the Hamiltonian $H=\sum_{s}h_{s}H_{s}$ such that $h_{s}\geq 0$ and $H_{s}$ is normalised and Hermitian the qDRIFT method requires,
    \begin{align}
        r \geq \left\lceil \frac{4t^{2}\Gamma}{\epsilon} \right\rceil
    \end{align}
    where $\epsilon >0 $ is the precision. 
\end{corollary}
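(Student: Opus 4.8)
The plan is to specialize Theorem~\ref{theorem1} to the closed‑system decomposition described in item (i) of Section~\ref{sec2}, namely $\lambda_k=h_k$ and $\mathcal{L}_k(\rho)=-i[H_k,\rho]$. The step count in Theorem~\ref{theorem1} involves only the quantities $\lambda_k$ and $\|\mathcal{L}_k\|_{\diamond}$, and the $\lambda_k$ are already fixed, so the whole corollary reduces to a single estimate: bounding the diamond norm of the commutator superoperator attached to a normalized Hermitian operator by $2$.

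To establish that estimate I would unfold the definition of the diamond norm and use the triangle inequality together with submultiplicativity of the trace norm under left/right multiplication by a bounded operator. For any ancilla space and any operator $X$ with $\|X\|_{1}\le 1$, one has $(\mathcal{L}_k\otimes\mathrm{id})(X)=-i\,[\,H_k\otimes\mathbb{1},\,X\,]$, hence
\begin{align}
\|(\mathcal{L}_k\otimes\mathrm{id})(X)\|_{1}
 \;\le\; \|(H_k\otimes\mathbb{1})X\|_{1}+\|X(H_k\otimes\mathbb{1})\|_{1}
 \;\le\; 2\,\|H_k\|_{\infty}\,\|X\|_{1}.
\end{align}
Taking the supremum over $X$ and over ancilla dimension gives $\|\mathcal{L}_k\|_{\diamond}\le 2\|H_k\|_{\infty}$, and since each $H_k$ is normalized this is $\|\mathcal{L}_k\|_{\diamond}\le 2$.

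The remaining step is purely arithmetic. Substituting $\|\mathcal{L}_k\|_{\diamond}^{2}\le 4$ and $\lambda_k=h_k$ into the bound from Theorem~\ref{theorem1} yields $\sum_{k}\lambda_k\|\mathcal{L}_k\|_{\diamond}^{2}\le 4\sum_{k}h_k=4\Gamma$, so the error bound of Theorem~\ref{theorem1} is at most $4t^{2}\Gamma/r$. Bounding this by $\epsilon$ and solving for $r$ (then taking the ceiling) shows that $r\ge\lceil 4t^{2}\Gamma/\epsilon\rceil$ is sufficient, which is the claim.

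I do not anticipate a real obstacle here; the proof is essentially an invocation of Theorem~\ref{theorem1} plus the elementary commutator bound. The only point that deserves a careful word is the normalization convention: the factor $4$ comes from $\|\mathcal{L}_k\|_{\diamond}\le 2$, which in turn rests on $\|H_k\|_{\infty}=1$. If one instead only assumes $\|H_k\|_{\infty}\le 1$ the same inequality and the same constant $4$ still hold, and under any other fixed normalization the constant would simply rescale, so the linear‑in‑$\Gamma$ scaling is robust to that choice.
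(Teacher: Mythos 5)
Your proposal is correct and follows essentially the same route as the paper: invoke Theorem~\ref{theorem1}, bound $\|\mathcal{L}_k\|_{\diamond}\leq 2\|H_k\|\leq 2$ for the commutator superoperator of a normalized Hermitian $H_k$, and solve $4t^{2}\Gamma/r\leq\epsilon$ for $r$. The only difference is that you spell out the diamond-norm estimate via the triangle inequality and H\"older bound on the trace norm, whereas the paper simply asserts it.
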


\begin{proof}
    Using the result of Theorem \ref{theorem1}. we see that we will need to compute $\sum_{k=1}^{n}\lambda_{k}\|\mathcal{L}_{k}\|_{\diamond}^{2}$. We observe that for closed system evolution $\mathcal{L}_{k}(\rho)=-i[H_{k},\rho]=-i(H_{k}\rho - \rho H_{k})$, we compute the diamond norm to find,
    \begin{align}
        \|\mathcal{L}_{k}\|_{\diamond}\leq 2\|H_{k}\|\leq 2,
    \end{align}
    since we chose $H_{k}$ so that they are normalised. Using this in the error in Theorem \ref{theorem1}. we have,
    \begin{align}
        \epsilon \geq \frac{4t^{2}\sum_{k=1}^{n}\lambda_{k}}{r}=\frac{4t^{2}\Gamma}{r}.
    \end{align}
    Solving for $r$ and considering the fact that $r$ is a positive integer we take the ceiling function leading to the bound,
    \begin{align}
        r \geq \left\lceil \frac{4t^{2}\Gamma}{\epsilon} \right\rceil.
    \end{align}
\end{proof}

The improved error bound shown in Theorem~\ref{theorem1} highlights the advantage of the modified qDRIFT protocol. By carefully analysing second-order contributions, we demonstrate that simulation precision can be achieved with fewer samples, thereby reducing the overall simulation cost while maintaining accuracy. In section \ref{sec4} we demonstrate this by comparing the result in corollary \ref{corollary1} to what was obtained in the original qDRIFT paper \cite{campbell2019random}.

\section{Applications}
\label{sec4}
\begin{figure*}[h]
    \centering
    \includegraphics[width=\textwidth]{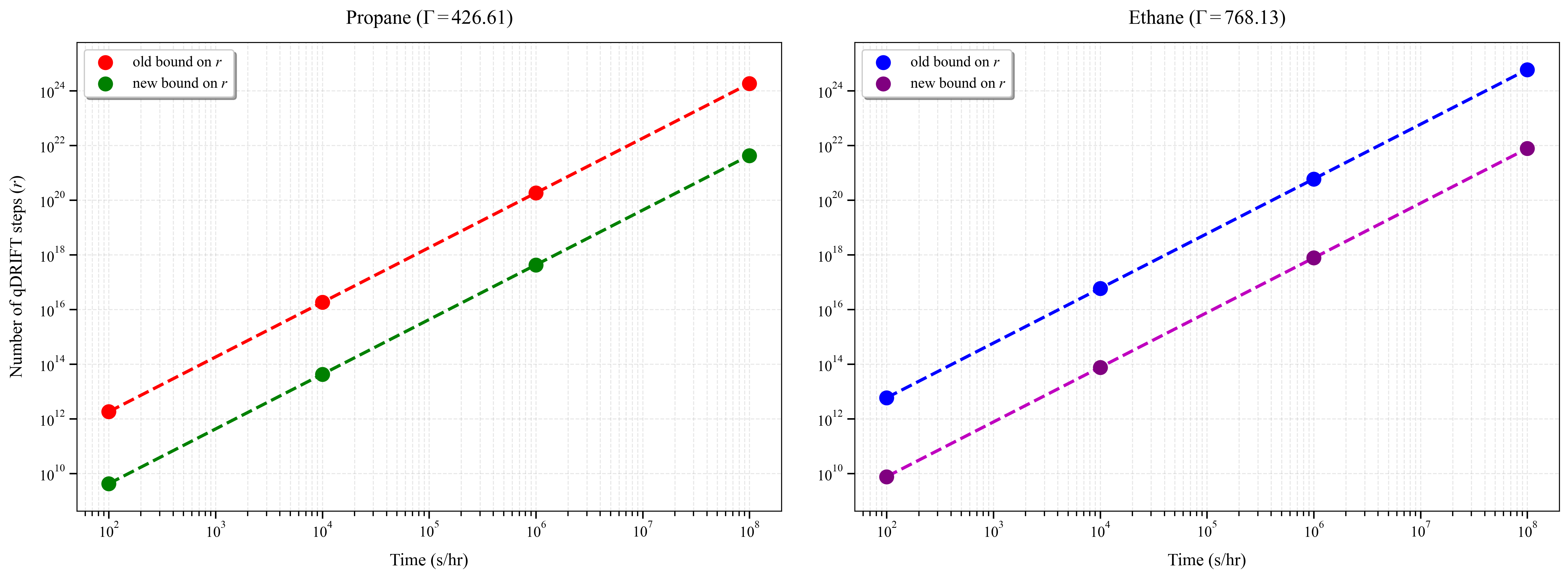}
    \caption{Comparison of the number of steps $r$ needed to simulate propane (left) and ethane (right) to a fixed precision $\epsilon = 10^{-3}$ over a range of simulation times, given in units s/hr which quantifies the rate of change in seconds over each hour, using the original qDRIFT bound and our improved bound. The improvement grows with simulation time, leading to reductions of several orders of magnitude.}
    \label{fig:chemistry}
\end{figure*}

In this section, we illustrate how our improved bound can be applied to a range of problems where the qDRIFT algorithm is relevant. First, we demonstrate how the new bound reduces the number of gates required to simulate electronic structure Hamiltonians, improving upon the results in the original qDRIFT paper \cite{campbell2019random}. Next, we show how it leads to more efficient simulation of a dissipative transverse field Ising model \cite{hashizume2022dynamical} as we need fewer steps $r$. Finally, we apply our bound to Hamiltonian-based data encoding, a technique of increasing importance in quantum machine learning \cite{schuld2021machine}.

\subsection{Demonstration via chemistry simulations}

Here we demonstrate the practical impact of our improved bound on $r$ in the context of simulating electronic structure Hamiltonians using qDRIFT. Specifically, we compare the number of steps required to simulate the molecules propane and ethane over a range of simulation times, using both the original qDRIFT bound from \cite{campbell2019random} and the improved bound derived in Corollary \ref{corollary1}. These examples are representative of the types of quantum chemistry problems considered in the original qDRIFT proposal.

Figure~\ref{fig:chemistry} shows $r$ plotted against $\text{Time}$ (in seconds/Hour), where the $x$ and $y$ axes are in $\log_{10}$ scale, comparing the number of steps required by the original bound (dashed red and blue lines) to those required by our improved bound (dashed green and purple lines). As expected, our new bound results in a consistent and significant reduction in $r$ across all time scales. This improvement scales proportionally with the sum of the coefficients $\Gamma$, leading to reductions of up to 2 orders of magnitude. 

To illustrate, at a simulation time of $10^6$ seconds, the number of steps needed to simulate propane and ethan drops from approximately $10^{20}$ to less than $10^{18}$. Such dramatic reductions in the number of steps correspond directly to fewer quantum gates, making these simulations more tractable.

This example underscores the practical utility of our theoretical result. By reducing the number of required steps without compromising precision, our bound enables more efficient quantum simulations of realistic chemical systems—highlighting the relevance of our analysis to quantum chemistry and beyond.

\subsection{Transverse Field Ising Model With Dissipation}

\begin{figure}[h!]
    \centering
    \includegraphics[width=\linewidth]{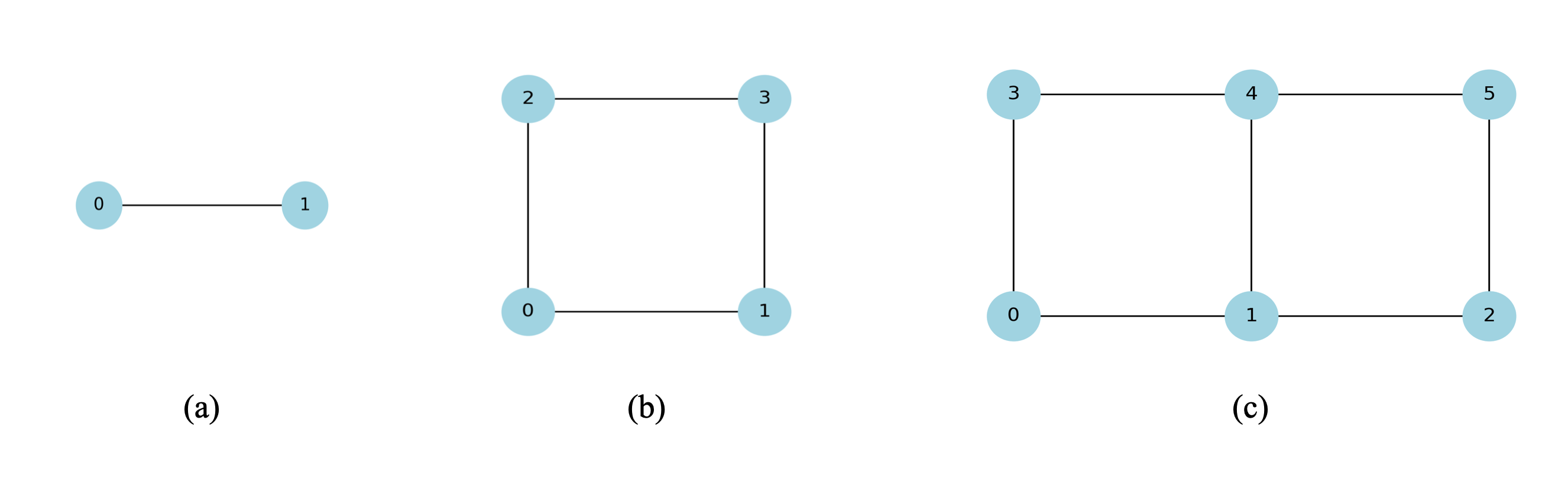}
    \caption{(a)-(c) shows the lattice configuration for the transverse field Ising model for two, four and six spins respectively. Here the nodes are spins and the edges represent the neighbour-neighbour interacting spins.}
    \label{fig:ising-lattices}
\end{figure}

We now demonstrate the effectiveness of our newly derived qDRIFT bound in the simulation of open quantum spin systems, using the transverse field Ising model with local dephasing as a representative example. This model, widely studied in both closed and open quantum dynamics \cite{hashizume2022dynamical}, consists of spin-1/2 particles arranged on a lattice, interacting via nearest-neighbour coupling and subject to transverse magnetic fields and local dephasing noise.

We consider lattice configurations with two, four, and six spins, as illustrated in Figure~\ref{fig:ising-lattices}. The Hamiltonian for this model is given by
\begin{align}
    H = -J\sum_{\langle i,j\rangle}Z_{i}Z_{j} - h\sum_{j}X_{j},
\end{align}
where $J$ denotes the strength of the nearest-neighbour coupling, $h$ is the transverse field amplitude in the $x$-direction, and $\langle i,j\rangle$ indicates summation over unique neighbouring spin pairs. Throughout this work, we use the values $J = 1$ and $h = 0.5$.

The dissipation is introduced through a set of local jump operators of the form
\begin{align}
    L_{j} = \sqrt{\gamma} \ Z_{j},
\end{align}
where $\gamma$ denotes the dephasing rate, and we set $\gamma = 0.1$. The full GKSL generator governing the system dynamics is then expressed as
\begin{align}
    \mathcal{L}(\rho) = iJ\sum_{\langle i,j \rangle}[Z_{i}Z_{j},\rho] + ih\sum_{j}[X_{j},\rho] + \gamma\sum_{j}(Z_{j}\rho Z_{j} - \rho).
\end{align}
The number of terms $n$ in this Liouvillian depends on the specific lattice connectivity: for two spins, $n = 5$; for four spins, $n = 12$; and for six spins, $n = 19$.

To assess the practical implications of our improved qDRIFT bound, we apply the stochastic simulation strategy to the open-system evolution generated by $\mathcal{L}$. For each system size, we compute the number of qDRIFT steps $r$ required to achieve a fixed simulation accuracy $\epsilon = 10^{-6}$ over a time $t = 4$. We compare the performance of the original and new bounds, using norm calculations based on the component operator weights.

\begin{figure}[h!]
    \centering
    \includegraphics[width=\linewidth]{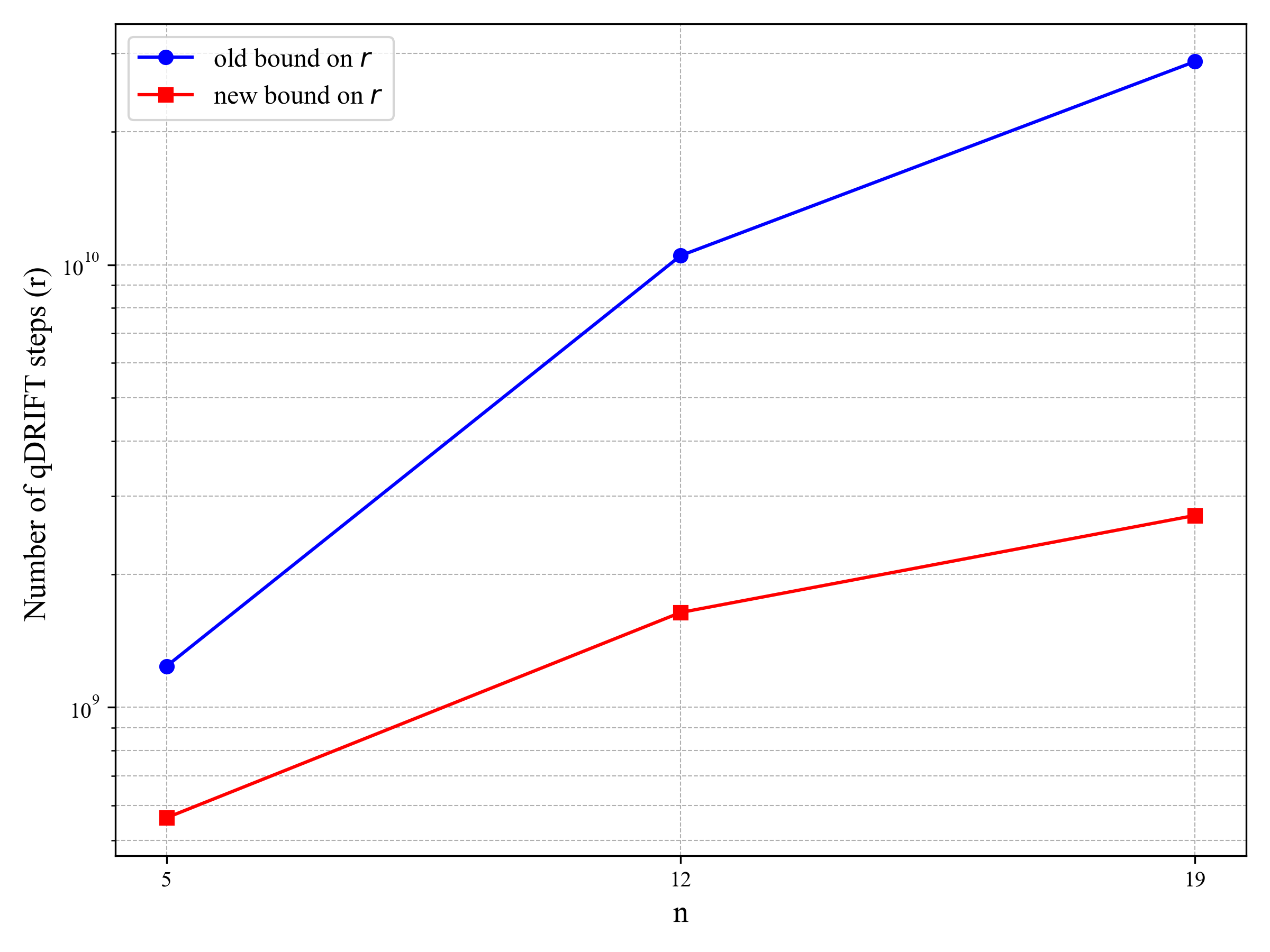}
    \caption{Comparison of the number of qDRIFT steps $r$ required to simulate the open transverse field Ising model using the old and new qDRIFT bounds. The simulation was performed for systems with 2, 4, and 6 spins. The new bound (red) yields a significant reduction in the required number of steps compared to the original bound (blue).}
    \label{fig:ising-qdrift-results}
\end{figure}

The results, presented in Figure~\ref{fig:ising-qdrift-results}, demonstrate that the new bound leads to a marked reduction in the number of required qDRIFT steps across all system sizes. While the number of required steps under the old bound scales steeply with the number of spins. The number of steps $r$ is $2.87 \times 10^{10}$ for six spins—our bound reduces this to $2.71\times 10^{9}$ steps. This improvement corresponds to 1 order of magnitude reduction in simulation cost, and underscores the practical advantage of the new bound for simulating dissipative quantum systems.

Such improvements are crucial when resources are limited and efficient implementations of open-system dynamics are essential. Our results indicate that the new qDRIFT bound not only improves theoretical scaling but also translates into tangible runtime savings in realistic open-system simulations.

\subsection{Hamiltonian Encoding in Quantum Machine Learning}

In this section, we demonstrate that our newly derived qDRIFT bound also yields significant improvements when applied to Hamiltonian encoding of large input data vectors, which is a key subroutine in certain quantum machine learning protocols \cite{schuld2021machine}.

Consider a classical data vector $\vec{x} \in \mathbb{R}^{N}$. To encode this vector into a quantum system, we construct a Hamiltonian of the form
\begin{align}
    H = \sum_{i=1}^{N}|x_{i}|\tilde{P}_{i},
\end{align}
where $x_i$ is the $i$-th component of the vector $\vec{x}$ and $\tilde{P}_{i} = \mathrm{sign}(x_i) P_i$. Here, each $P_i$ is a Hermitian and unitary operator constructed as a tensor product of single-qubit Pauli operators ($I$, $X$, $Y$, $Z$). These Pauli strings $P_i$ serve as basis operators in the space of $2^l \times 2^l$ Hermitian matrices, with $l = \lceil \log_2 N \rceil$ being the number of qubits required to represent an $N$-dimensional vector. This embedding ensures that the constructed Hamiltonian is both efficiently simulated and faithful to the structure of the input data. Writing the Hamiltonian this way also ensures that the requirements for qDRIFT are satisfied, that is, the coefficients $|x_{i}|$ are positive and $\|\tilde{P}_{i}\| = 1$.

For large-scale machine learning problems where $N$ is large (e.g., $N \sim 10^2$ or more), simulating the time evolution under $H$ is computationally intensive. The qDRIFT algorithm offers a randomized, structure-exploiting simulation method that samples terms in $H$ proportional to their norm, and approximates $e^{-iHt}$ with a product of randomly selected unitaries. The number of qDRIFT steps, $r$, required to simulate $H$ to within a fixed error $\epsilon$ depends on a norm-related bound. Our new bound modifies this scaling significantly.

To illustrate this, we computed the number of required qDRIFT steps using both the original and our improved bound as a function of the input dimension $N$. We used randomly generated data vectors $\vec{x}$ and calculated $r$ for each $N$ using both bounds, fixing the error tolerance $\epsilon = 10^{-7}$ and total simulation time $t=1$. The results are shown in Figure~\ref{fig:hamiltonian-encoding-results}.

\begin{figure}[h!]
    \centering
    \includegraphics[width=\linewidth]{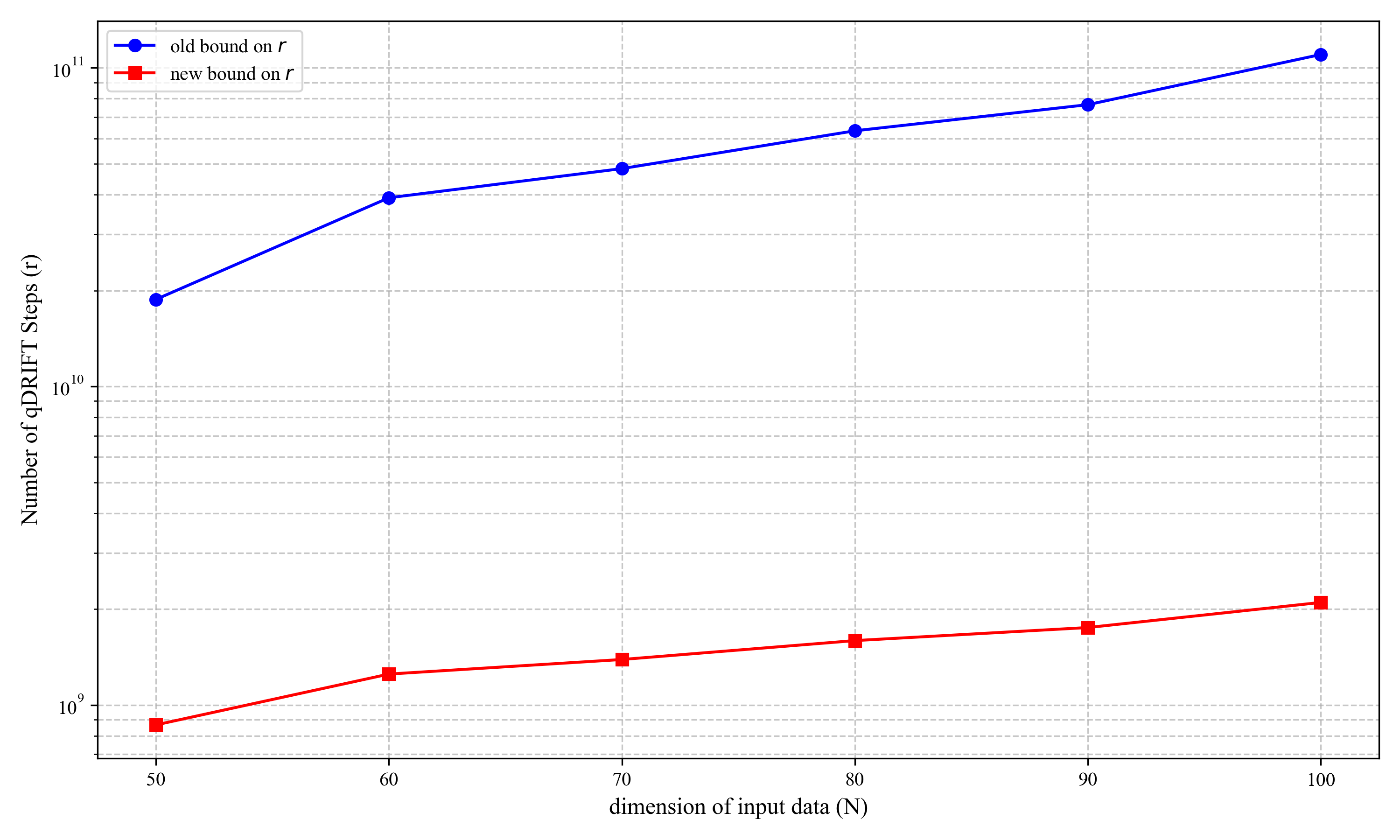}
    \caption{Comparison of the number of qDRIFT steps $r$ required for Hamiltonian encoding of classical data vectors, as a function of vector dimension $N$. The blue curve shows the result using the old qDRIFT bound, while the red curve uses our improved bound. The improvement is more than two orders of magnitude for all $N$ tested.}
    \label{fig:hamiltonian-encoding-results}
\end{figure}

From the plot, we observe that the old qDRIFT bound yields step counts on the order of $10^{11}$ for $N=100$, whereas our new bound consistently remains around $10^9$ in this regime. This is a reduction of roughly two orders of magnitude in simulation cost.

\section{Conclusion}
\label{sec5}
Efficient quantum simulation remains a central challenge in quantum algorithm design, particularly for both unitary and non-unitary dynamics. While randomized methods such as qDRIFT offer a promising framework for compiling simulations with reduced gate counts \cite{campbell2019random,childs2019faster}, their error bounds—and hence resource requirements—can scale unfavourably in many practical scenarios where the sum of the coefficients $\Gamma$ is large. This motivates a re-examination of the theoretical foundations of such algorithms to improve their asymptotic performance and broaden their applicability.
In this work, we have introduced a new theoretical analysis of the qDRIFT algorithm, deriving a tighter error bound that leads to a substantial reduction in the number of samples (and hence gates) required for simulation. Our new proof generalizes naturally to both Hamiltonian evolution and Lindbladian dynamics, thereby addressing both closed and open quantum systems within a unified framework. As a direct corollary, our result strengthens the original Hamiltonian-only qDRIFT analysis, offering improvements in both theoretical scaling and practical performance by scaling linearly in the sum of the coefficients $\Gamma$.
To demonstrate the practical benefits of our bound, we applied it to three representative settings. First, in the simulation of molecular systems such as propane and ethane, our bound reduces the gate count by several orders of magnitude when compared to the original qDRIFT scaling \cite{campbell2019random}. Second, we considered a dissipative transverse-field Ising model on lattices of increasing size, showing consistent improvements in the simulation cost under our bound. Lastly, we explored Hamiltonian encoding of classical data vectors for quantum machine learning, where our result allows for efficient encoding of high-dimensional vectors with significantly fewer quantum gates, preserving simulation accuracy.
Together, these demonstrations highlight the broad utility of our improved bound, showing that it not only advances the theoretical analysis of randomized simulation algorithms but also yields tangible resource savings across diverse quantum applications. The reduction in circuit depth and sample complexity can directly impact the feasibility of simulating complex systems on future quantum hardware and inform algorithm design in data-intensive quantum tasks.
Future work could explore trying to achieve similar scaling for other quantum simulation algorithms for open quantum systems, we would also like to explore the use of this bound in thermal state preparation as in \cite{chen2024randomized}.

\section*{Acknowledgements}
The authors would like to thank Ms S. M. Pillay for proof reading the manuscript and her insightful discussions. This work is based upon research supported by the
National Research Foundation of the Republic of South
Africa. IJD is the Chief Operations Officer and Co-founder of fraqtal technologies Pty (Ltd). FP is the Chair of the Scientific Board and Co-Founder of
QUNOVA computing. The authors declare no other competing interests.

% Create the reference section using BibTeX:
\bibliography{references}

\end{document}